\newtheorem{theorem}{Theorem}
\newtheorem{proposition}{Proposition}
\newtheorem{lemma}{Lemma}
\theoremstyle{definition}
\newcommand{\ket}[1]{\left | #1 \right\rangle}
\newcommand{\bra}[1]{\left \langle #1 \right |}
\newcommand{\Tr}{\mathrm{Tr}}
\newcommand{\proj}[1]{\ket{#1}\bra{#1}}
\renewcommand{\epsilon}{\varepsilon}
\def\identity{\leavevmode\hbox{\small1\kern-3.8pt\normalsize1}}
\begin{document}

\title{Entanglement gain in measurements with unknown results}

\author{Margherita Zuppardo}
\affiliation{School of Physical and Mathematical Sciences, Nanyang Technological University, 637371 Singapore, Singapore}
\affiliation{Science Institute, University of Iceland, IS-107 Reykjavik, Iceland}

\author{Ray Ganardi}
\affiliation{School of Physical and Mathematical Sciences, Nanyang Technological University, 637371 Singapore, Singapore}

\author{Marek Miller}
\affiliation{School of Physical and Mathematical Sciences, Nanyang Technological University, 637371 Singapore, Singapore}

\author{Somshubhro Bandyopadhyay}
\affiliation{Department of Physics and Center for Astroparticle Physics and Space Science, Bose Institute, EN80, Bidhannagar, Kolkata 700091, India}

\author{Tomasz Paterek}
\affiliation{School of Physical and Mathematical Sciences, Nanyang Technological University, 637371 Singapore, Singapore}
\affiliation{MajuLab, CNRS-UCA-SU-NUS-NTU International Joint Research Unit, UMI 3654 Singapore, Singapore}

\date{\today}

\begin{abstract}
We characterise non-selective global projective measurements capable of increasing quantum entanglement between two particles. We show that non-selective global projective measurements are capable of increasing entanglement between two particles, in particular, entanglement of any pure non-maximally entangled state can be improved in this way (but not of any mixed state) and we provide detailed analysis for two qubits.
It is then shown that Markovian open system dynamics can only approximate such measurements,
but this approximation converges exponentially fast as illustrated using Araki-\.Zurek model.
We conclude with numerical evidence that macroscopic bodies in a random pure state do not gain entanglement in a random non-selective global measurement.
\end{abstract}

\maketitle

Quantum entanglement is a prerequisite for many quantum information applications and tests of foundations of physics~\cite{Horodecki_2009}.
It is typically created in a sequence of unitary processes (quantum gates) applied to an initially disentangled system, whose environment typically inhibits the non-classical correlations,
see e.g.~\cite{Julsgaard_2001,Steffen_2006,Blatt_2008,Palomaki_2013,Jabir_2017,Riedinger_2017}.
The environment does not always play the destructive role and another established route to entanglement generation is to engineer system-environment interactions in such a way that the decaying principal system reaches entangled steady state,
see e.g.~\cite{Krauter_2011,Verstraete_2009,Lin_2013,Shankar_2013,Reiter_2016}.
In this case, a manifold of initial system states evolve to the same entangled steady state.

Here we consider yet another possibility to generate or increase entanglement --- via global projective measurements with unknown results,
which may arise from the lack of ability to post-select a particular measurement result or even naturally since any measurement can be approximated with suitable open system dynamics.
Our approach is therefore related to engineering system-environment interactions but deviates from the decay processes because different initial states end up mostly in different final states having different amounts of entanglement.
It is also impossible to prepare maximally entangled states through any non-selective measurement, but dissipation allows for such a possibility~\cite{Lin_2013,Shankar_2013,Reiter_2016}.
(Actually, measurements along maximally entangled bases are useless for entanglement gain, modulo comments in Sec.\ref{SEC_MAX_ENT}).
Nevertheless, high entanglement can be observed as a result of non-selective measurement and, in fact, we show that entanglement of any pure non-maximally entangled state can be enhanced in this way, and we provide candidate optimal measurements for two qubits.
Next we show an example of Markovian open system dynamics which rapidly approximates non-selective measurement on the principal system.
In this way, we present another class of open system dynamics, additional to the decay processes, which increases entanglement in the system.
Finally, we study negativity gain with random initial states and random measurements for growing dimensionality of the subsystems.
We find that the probability of entanglement gain is negligible for higher dimensions.
This could be seen as an element of quantum-to-classical transition showing that macroscopic objects (of high dimensionality $d$) 
would have to be prepared in specific states and measured with specific non-selective measurements in order to entangle them.

%%%%%%%%%%%%%%%%%%%%%%%%%%%%%%%%%%%%%%%%%%%%%%%%%%%%%%%%%%%%%%%%%%%%%%%%%%%%%%%%%%%%
%%%%%%%%%%%%%%%%%%%%%%%%%%%%%%%%%%%%%%%%%%%%%%%%%%%%%%%%%%%%%%%%%%%%%%%%%%%%%%%%%%%%

\section{Black box problem}

\begin{figure}[b]
\includegraphics[scale=0.45]{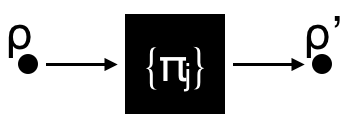}
\caption{Entanglement gain in non-selective global measurement as a black box problem.
A state $\rho$ is input to a black box measuring device which conducts von Neumann measurement along projectors $\{\Pi_j\}$.
Measurement results are unknown and therefore the state emerging from the box is $\rho' = \sum_j \Pi_j \rho \Pi_j$.
We study conditions under which entanglement in the state $\rho'$ is higher than entanglement of the input state $\rho$.}
\label{FIG_BB}
\end{figure}

We begin with the abstract formulation of the problem, described in Fig.~\ref{FIG_BB}.
A bipartite quantum system in a general mixed state $\rho$ enters a black box where projective measurement $\{\Pi_j\}$ is conducted but its results remain unknown.
That is, after the measurement, no single state $\Pi_j = | \psi_j \rangle \langle \psi_j |$ is selected for but rather we deal with a statistical mixture
\begin{equation}
	\label{eq:postmeasure}
	\rho' = \sum \limits_{j=0}^{D-1} \Pi_j \rho \Pi_j,
\end{equation}
where $D = d_1 d_2$ is the total dimension of tensor product space $\mathcal{H}_1 \otimes \mathcal{H}_2$
of Hilbert spaces $\mathcal{H}_1$ and $\mathcal{H}_2$ of dimensions $d_1$ and $d_2$ describing each subsystem respectively.
Throughout the paper, we assume without loss of generality that $d_1 \leq d_2$.
Our task is to study whether it is possible for quantum entanglement to increase as a result of the measurement, i.e. under what conditions the quantity
\begin{equation}
\Delta E \equiv E(\rho') - E(\rho)
\end{equation}
is positive for a suitable entanglement measure $E$, e.g. negativity~\cite{Zyczkowski_1998, Lee_2000a, Lee_2000b}.
\footnote{Note that if we allow general POVM measurements in place of projective von Neumann measurements the problem becomes trivial.
There is a POVM that simply ignores the input to the box and always outputs maximally entangled pure state $| \psi^{+} \rangle$.
For instance, take the POVM to be the collection of the following measurement operators: $M_j = | \psi^{+} \rangle \bra{\psi_j}$.}

\subsection{Upper bound on entanglement gain}

The following observation is an immediate consequence of the convexity of the entanglement measure $E$ and Eq.~\eqref{eq:postmeasure}.
	For any convex entanglement measure $E$,
	\begin{equation}
		E(\rho') \leq \max \limits_{j} E(\Pi_j).
		\label{EQ_MAX}
	\end{equation}
Note that inequality (\ref{EQ_MAX}) holds also for entanglement monotones which are not convex, but non-decreasing functions of convex ones.
A concrete example is the logarithmic negativity $LN(\rho) = \text{ln} (2 N(\rho) - 1)$,
where the negativity $N(\rho)$ of the state $\rho$ is the sum of moduli of all negative eigenvalues of the partial transposition $\rho^{\Gamma}$ of $\rho$~\cite{Zyczkowski_1998,Vidal_2002} and it is a convex function.

In other words, this observation says that there is no free lunch.
In order to end up with an entangled state at the output there must be a supply of entanglement in the basis states along which one measures.
This also suggests that it is desirable to choose the measurement basis states with as much entanglement as possible.
It turns out that this intuition is incorrect as the following section demonstrates.

\subsection{No gain with maximally entangled bases}
\label{SEC_MAX_ENT}

We now provide a proof that a measurement along any orthonormal basis composed of maximally entangled states of two qubits does not increase entanglement of any input state.
The idea is to use the fact that any such basis can be transformed as a whole, by local unitary operations, to the Bell basis.~\footnote{This is a folklore in the community, but since we were not able to locate a concrete reference we provide the complete proof as Lemma \ref{LEM_FOLKLORE} in the Appendix.}
We then note that the non-selective measurement along the Bell basis can be implemented by local operations and classical communication (LOCC).
Hence any measurement along maximally entangled basis has LOCC realisation and cannot increase entanglement.
We will comment on higher dimensions after presenting the proof.

\begin{theorem}
Assume that $D=4$ and let every state $\ket{\psi_j}$ in the measurement basis be maximally entangled.
Then for any input state $\rho$ and for any entanglement monotone $E$:
	$$E(\rho') \leq E(\rho),$$
where $\rho' = \sum_{j=0}^{3}  \ket{\psi_j} \bra{\psi_j} \rho \ket{\psi_j} \bra{\psi_j}$.
\end{theorem}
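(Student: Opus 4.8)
The plan is to realise the non-selective measurement as a sequence of LOCC operations and then invoke the defining property of entanglement monotones, exactly along the roadmap sketched before the statement. First I would apply the folklore result of Lemma~\ref{LEM_FOLKLORE}: since $D=4$ and every $\ket{\psi_j}$ is maximally entangled, there exist local unitaries $U_1,U_2$ such that $\ket{\psi_j}=(U_1\otimes U_2)\ket{\beta_j}$, where $\{\ket{\beta_j}\}_{j=0}^{3}$ is the Bell basis. Substituting this into the definition of $\rho'$ and pulling the common local unitaries outside the sum gives
\[
\rho' = (U_1\otimes U_2)\,\mathcal{D}\!\big((U_1^\dagger\otimes U_2^\dagger)\,\rho\,(U_1\otimes U_2)\big)\,(U_1^\dagger\otimes U_2^\dagger),
\]
where $\mathcal{D}(\tau)=\sum_{j}\ket{\beta_j}\bra{\beta_j}\,\tau\,\ket{\beta_j}\bra{\beta_j}$ is the non-selective Bell measurement. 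Because conjugation by local unitaries is trivially an LOCC channel, the whole map $\rho\mapsto\rho'$ is LOCC as soon as $\mathcal{D}$ is, so the problem reduces to showing that $\mathcal{D}$ admits an LOCC realisation.

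The key step is to rewrite $\mathcal{D}$ as a local-unitary twirl. I would establish the identity
\[
\mathcal{D}(\tau) = \frac{1}{4}\sum_{k=0}^{3} (\sigma_k\otimes\sigma_k)\,\tau\,(\sigma_k\otimes\sigma_k)^\dagger,
\]
with $\sigma_0=\identity$ and $\sigma_{1,2,3}$ the Pauli operators $X,Y,Z$. The reason is that the four Bell states are precisely the joint eigenstates of the mutually commuting local observables $X\otimes X$ and $Z\otimes Z$ (their product being proportional to $Y\otimes Y$), so $\{\identity\otimes\identity,\,X\otimes X,\,Y\otimes Y,\,Z\otimes Z\}$ form an abelian, stabiliser-type group whose uniform average dephases any operator exactly in the Bell basis; the global signs in $Y\otimes Y=-(X\otimes X)(Z\otimes Z)$ are irrelevant since each term appears under conjugation. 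The right-hand side is a uniform mixture of local-unitary conjugations, i.e. it can be implemented by local operations assisted with shared classical randomness: one party samples $k\in\{0,1,2,3\}$ uniformly, broadcasts it, and each party applies $\sigma_k$ on its own qubit. This is a valid LOCC channel, hence so is $\mathcal{D}$ and therefore so is $\rho\mapsto\rho'$.

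It then remains only to invoke that any entanglement monotone $E$ is non-increasing under LOCC, which yields $E(\rho')\leq E(\rho)$ and proves the theorem. I expect the main obstacle to lie in the twirl step rather than the final monotonicity argument: one must check carefully that the chosen group of local operators has distinct joint eigenvalues on the four Bell states, so that averaging reproduces $\mathcal{D}$ and not merely a coarser dephasing, and — depending on the precise definition adopted — that ``non-increasing under LOCC'' for a deterministic, outcome-averaged channel follows from the monotone's behaviour under selective protocols together with convexity. I would also flag that the restriction $D=4$ is exactly what makes both ingredients available: the local-unitary equivalence to the Bell basis and the existence of a commuting local Pauli group realising the dephasing. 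For $d_1>2$ not every maximally entangled basis is local-unitarily equivalent to such a ``nice error basis,'' so the argument does not transfer verbatim, consistent with the paper's intention to comment separately on higher dimensions.
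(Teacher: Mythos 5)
Your proposal is correct and follows essentially the same route as the paper: reduce to the Bell basis via Lemma~\ref{LEM_FOLKLORE}, identify the non-selective Bell measurement with the uniform twirl $\frac{1}{4}\sum_{k}(\sigma_k\otimes\sigma_k)\,\tau\,(\sigma_k\otimes\sigma_k)$, note that this is an LOCC channel (shared randomness plus local Pauli rotations), and invoke monotonicity under LOCC. The only cosmetic difference is that the paper verifies the twirl identity by direct computation from the Pauli decomposition of the Bell projectors, whereas you justify it via the stabiliser-group argument.
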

\begin{proof}
By Lemma \ref{LEM_FOLKLORE} (see Appendix),
the basis $\{ \ket{\psi_j} \}$ is locally unitarily equivalent to the Bell basis:
\begin{eqnarray}
| \psi_0 \rangle & = & \frac{1}{\sqrt{2}} (\ket{00} + \ket{11}), \quad
| \psi_1 \rangle = \frac{1}{\sqrt{2}} (\ket{01} + \ket{10}), \nonumber \\
| \psi_2 \rangle & = & \frac{1}{\sqrt{2}} (\ket{01} - \ket{10}), \quad
| \psi_3 \rangle = \frac{1}{\sqrt{2}} (\ket{00} - \ket{11}).
\end{eqnarray}
It is a matter of straightforward calculation to verify that
	\begin{eqnarray}
		\ket{\psi_0} \bra{\psi_0} & = &
		\frac{1}{4} \left(
			\identity \otimes \identity +
			\hat \sigma_1 \otimes \hat \sigma_1 -
			\hat \sigma_2 \otimes \hat \sigma_2 +
			\hat \sigma_3 \otimes \hat \sigma_3
		\right ), \nonumber \\
		\ket{\psi_1} \bra{\psi_1} & = &
		\frac{1}{4} \left(
			\identity \otimes \identity +
			\hat \sigma_1 \otimes \hat \sigma_1 +
			\hat \sigma_2 \otimes \hat \sigma_2 -
			\hat \sigma_3 \otimes \hat \sigma_3
		\right ), \nonumber \\
		\ket{\psi_2} \bra{\psi_2} & = &
		\frac{1}{4} \left(
			\identity \otimes \identity -
			\hat \sigma_1 \otimes \hat \sigma_1 -
			\hat \sigma_2 \otimes \hat \sigma_2 -
			\hat \sigma_3 \otimes \hat \sigma_3
		\right ), \nonumber \\
		\ket{\psi_3} \bra{\psi_3} & = &
		\frac{1}{4} \left(
			\identity \otimes \identity -
			\hat \sigma_1 \otimes \hat \sigma_1 +
			\hat \sigma_2 \otimes \hat \sigma_2 +
			\hat \sigma_3 \otimes \hat \sigma_3
		\right ), \nonumber
	\end{eqnarray}
where e.g. $\hat \sigma_1 = | 0 \rangle \langle 1| + | 1 \rangle \langle 0| $ stands for the $x$ Pauli operator.
Using these equations one finds after direct calculation the post-measurement state:
\begin{eqnarray}
\rho' & = & \frac{1}{4} (\rho + \hat \sigma_1 \otimes \hat \sigma_1 \, \rho \, \hat \sigma_1 \otimes \hat \sigma_1 \nonumber \\
& + & \hat \sigma_2 \otimes \hat \sigma_2 \, \rho \, \hat \sigma_2 \otimes \hat \sigma_2 +
		\hat \sigma_3 \otimes \hat \sigma_3 \, \rho \, \hat \sigma_3 \otimes \hat \sigma_3). \label{EQ_BELL_LOCC}
\end{eqnarray}
The same post-measurement state is obtained by the following LOCC procedure between the two observers, Alice and Bob.
Alice selects at random number $j$ and informs Bob about the value of this classical random variable.
Then Alice and Bob apply locally operation $\hat \sigma_j \otimes \hat \sigma_j$.
Finally, they erase the record of the values of $j$.
Therefore, for any entanglement monotone $E(\rho') \leq E(\rho)$.
\end{proof}

Unfortunately extension of this method to higher dimensions is not straightforward.
The properties of $\alpha^{(j)}$ matrices generalise and there is a similar LOCC procedure
that returns the post-measurement state which would be obtained by conducting non-selective measurement in the following straightforward generalisation of the Bell basis~\cite{teleportation}:
\begin{equation}
| \psi_{jk} \rangle = \frac{1}{\sqrt{d}} \sum_{m = 0}^{d-1} \omega_d^{m k} |m \rangle | m + j \rangle,
\label{EQ_MAX_END_D}
\end{equation}
where $j,k = 0,\dots, d - 1$, $\omega_d = \exp(i 2 \pi / d)$, the sum in the last ket is modulo $d$, and for simplicity we assume that both system have the same dimension $d$.
However, it is not clear that any maximally entangled basis can be transformed by local unitaries to the basis (\ref{EQ_MAX_END_D}).

%%%%%%%%%%%%%%%%%%%%%%%%%%%%%%%%%%%%%%%%%%%%%%%%%%%%%

\subsection{Pure input state}

Here we provide two theorems about possibility of entanglement gain if pure states are presented at the input.
In Theorem~\ref{TH_GIVEN_MEAS} we construct an input state whose entanglement increases in a given measurement.
It turns out that all non-trivial von Neumann measurement, with individual basis vectors being neither product nor maximally entangled states in some subspace, increase entanglement of suitable initial state.
In Theorem~\ref{TH_PURE_IN} we construct a measurement which improves entanglement of any pure input state which is not maximally entangled.

\begin{theorem}
\label{TH_GIVEN_MEAS}
Suppose that the measurement basis $\{ | \psi_j \rangle \}$ is such that $\ket{\psi_0}$ has at least two non-zero Schmidt coefficients that are not equal.
Then there exists a pure input state $\rho = \ket{\phi} \bra{\phi}$ for which negativity increases, i.e. $N(\rho') > N(\rho)$,
where $\rho' = \sum_{j = 0}^{D-1} | \langle \psi_j | \phi \rangle |^{2} \,
	\ket{\psi_j} \bra{\psi_j}$.
\end{theorem}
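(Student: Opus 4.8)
The plan is to turn the upper bound $N(\rho')\leq\max_j N(\Pi_j)$ on its head: I will choose the input $\ket{\phi}$ as a small perturbation of $\ket{\psi_0}$ designed so that the \emph{input} negativity drops to first order in the perturbation parameter, whereas the dephased \emph{output} $\rho'$ remains within $O(\epsilon^2)$ of $\proj{\psi_0}$ and hence can lose negativity only to second order. Since local unitaries leave negativity invariant and act covariantly on the map $\rho\mapsto\rho'$, I first pass to the Schmidt basis of $\ket{\psi_0}$, writing $\ket{\psi_0}=\sum_k\lambda_k\ket{kk}$ with, after relabelling, $\lambda_0>\lambda_1>0$ guaranteed by the hypothesis. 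I then set $\ket{\phi_\epsilon}\propto\ket{\psi_0}+\epsilon\ket{\eta}$ with $\ket{\eta}=(\lambda_1\ket{00}-\lambda_0\ket{11})/\sqrt{\lambda_0^2+\lambda_1^2}$, which is orthogonal to $\ket{\psi_0}$ and lives entirely in the two-dimensional Schmidt block where the coefficients differ.

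The crucial simplification is that this $\ket{\eta}$ keeps $\ket{\phi_\epsilon}$ in Schmidt form, since every term is of the type $\ket{kk}$; the Schmidt coefficients $\mu_k$ are then read off directly and $N(\rho)=\half[(\sum_k\mu_k)^2-1]$ is exact. A short expansion gives $\sum_k\mu_k=\Lambda+\epsilon(\lambda_1-\lambda_0)/\sqrt{\lambda_0^2+\lambda_1^2}+O(\epsilon^2)$ with $\Lambda=\sum_k\lambda_k$, so that $N(\rho)=N(\proj{\psi_0})+c_1\epsilon+O(\epsilon^2)$ with $c_1=\Lambda(\lambda_1-\lambda_0)/\sqrt{\lambda_0^2+\lambda_1^2}$. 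This is exactly where the hypothesis enters: $c_1<0$ precisely because $\lambda_0\neq\lambda_1$, while equal (locally maximally entangled) coefficients would annihilate the linear term, consistent with the no-gain result of Sec.~\ref{SEC_MAX_ENT}. For $\epsilon>0$ the input therefore has strictly smaller negativity than $\ket{\psi_0}$.

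For the output I use that $\braket{\psi_0}{\eta}=0$, so the weight $|\braket{\psi_0}{\phi_\epsilon}|^2=1/(1+\epsilon^2)$ differs from unity by $O(\epsilon^2)$ and all remaining weights are $O(\epsilon^2)$; hence $\|\rho'-\proj{\psi_0}\|_1=O(\epsilon^2)$. Because the partial transpose is a bounded map and $N(\sigma)=\half(\|\sigma^\Gamma\|_1-1)$ is Lipschitz in the trace norm, the reverse triangle inequality yields $N(\rho')=N(\proj{\psi_0})+O(\epsilon^2)$. Combining the two estimates gives $N(\rho')-N(\rho)=|c_1|\,\epsilon+O(\epsilon^2)$, which is strictly positive for all sufficiently small $\epsilon>0$, proving the claim and identifying $\ket{\phi_\epsilon}$ as an explicit witness.

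The main point to handle with care is the asymmetry in orders: a genuine first-order decrease of the input negativity set against a merely second-order change of the output. I expect the only real work is to check that the chosen direction $\ket{\eta}$ does both jobs at once, namely (i) staying inside the differing Schmidt block so that $c_1\neq0$, and (ii) remaining orthogonal to $\ket{\psi_0}$ so that the output perturbation is quadratic rather than linear. Everything else reduces to the exact Schmidt-form computation of $N(\rho)$ and a continuity bound for $N(\rho')$, neither of which presents an obstacle.
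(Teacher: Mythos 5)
Your proof is correct and follows essentially the same strategy as the paper's: perturb $\ket{\psi_0}$ within its own Schmidt basis so that the input negativity drops at first order in $\epsilon$ (using the exact Schmidt-form formula, valid for $\epsilon$ small enough that all coefficients stay positive), while the post-measurement state carries only $O(\epsilon^2)$ weight outside $\proj{\psi_0}$ and hence, by a trace-norm/triangle-inequality bound, loses negativity at most at second order. The only difference is the choice of perturbation direction --- you take a vector orthogonal to $\ket{\psi_0}$ supported on the two-dimensional block with unequal coefficients, whereas the paper subtracts $\epsilon\ket{\Phi}$ with $\ket{\Phi}$ the maximally entangled state in the Schmidt basis --- which makes your output estimate marginally cleaner but does not change the argument.
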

\begin{proof}
By assumption $\ket{\psi_0} = \sum_{i=0}^{d -1 } \sqrt{p_i} \ket{ii}$, where $d \leq d_1$, each $p_i > 0$, and $p_0 \neq p_1$.
We construct the input state $\ket{\phi}$ as follows:
\begin{equation}
\ket{\phi} \sim \ket{\psi_0} - \epsilon \ket{\Phi} = \sum_{i = 0}^{d-1} \left( \sqrt{p_i} - \epsilon / \sqrt{d} \right) | ii \rangle,
\label{EQ_PHI}
\end{equation}
where $\ket{\Phi} = \frac{1}{\sqrt{d}} \sum_{i=0}^{d-1} \ket{ii}$ is the maximally entangled state defined using the Schmidt basis of $\ket{\psi_0}$.
State $\ket{\phi}$ is normalised upon multiplication with constant $C^{-1} = (1 + \epsilon^2 - 2 \epsilon \beta)^{-1/2}$, where $\beta \equiv \langle \psi_0 | \Phi \rangle$.
Our aim is to derive the range of $\epsilon$ for which the post-measurement state has more negativity than the input state $\ket{\phi}$.
The post-measurement state reads:
\begin{equation}
\rho' = (1-p) \ket{\psi_0}\bra{\psi_0} + p \tilde{\rho},
\end{equation}
where $\tilde \rho$ lies in the orthogonal subspace to $\ket{\psi_0}$, and $p = 1 - | \langle \psi_0 | \phi \rangle |^2 = \frac{\epsilon^2}{C^2} (1 - \beta^2) > 0$ is the probability of obtaining the result different from zero.
By the triangle inequality of the trace norm:
\begin{eqnarray}
	|| \ket{\psi_0} \bra{\psi_0}^{\Gamma} ||_1 & \leq & \
	|| {\rho'}^\Gamma ||_1 +
	p || \tilde{\rho}^\Gamma - \ket{\psi_0}\bra{\psi_0}^\Gamma ||_1 \nonumber \\
	& \leq &
	|| {\rho'}^{\Gamma} ||_1 + 2 d p.
\end{eqnarray}
Since negativity of a generic state $\xi$ is given by $N(\xi) = (|| \xi^\Gamma||_1 - 1)/2$ we find that
\begin{equation}
N(\rho') \geq N(\ket{\psi_0}\bra{\psi_0}) - pd.
\label{EQ_FINAL_NEG}
\end{equation}
For the comparison with the initial state we calculate
\begin{equation}
N(\rho) = \frac{1}{C^2} \left[ N(\ket{\psi_0} \bra{\psi_0}) + \epsilon ( \epsilon - 2 \beta) N(\ket{\Phi}\bra{\Phi}) \right],
\label{EQ_NRHO_N0}
\end{equation}
where for simplicity we assumed that all the coefficients in Eq. (\ref{EQ_PHI}) are positive, i.e.
\begin{equation}
\epsilon < \sqrt{d} \, \min_i p_i.
\label{EQ_ESP_CONDITION}
\end{equation}
We now calculate $N(\ket{\psi_0}\bra{\psi_0})$ from (\ref{EQ_NRHO_N0}) and use it on the right-hand side of (\ref{EQ_FINAL_NEG}).
The requirement of negativity gain, i.e. $N(\rho') > N(\rho)$, holds for positive epsilon satisfying
\begin{equation}
\epsilon < \frac{2\beta [ N(\ket{\Phi}\bra{\Phi}) - N(\ket{\psi_1}\bra{\psi_1})]}
		{d(1-\beta^2) + N(\ket{\Phi}\bra{\Phi}) - N(\ket{\psi_1}\bra{\psi_1})}.
\end{equation}
We remind that $\epsilon$ also has to satisfy (\ref{EQ_ESP_CONDITION}).
\end{proof}

A simple example shows that the condition in Theorem \ref{TH_GIVEN_MEAS} is necessary.
Consider two qutrits and suppose we do a non-selective measurement along the following basis
\begin{eqnarray}
	\ket{\psi_0} &=& \frac{1}{\sqrt{2}} \left( \ket{00} + \ket{11} \right), \quad
	\ket{\psi_1} = \frac{1}{\sqrt{2}} \left( \ket{00} - \ket{11} \right),	\nonumber \\
	\ket{\psi_2} &=& \frac{1}{\sqrt{2}} \left( \ket{01} + \ket{10} \right),	 \quad
	\ket{\psi_3} = \frac{1}{\sqrt{2}} \left( \ket{01} - \ket{10} \right),	\nonumber \\
	\ket{\psi_4} &=& \ket{02}, \quad
	\ket{\psi_5} = \ket{12}, \quad
	\ket{\psi_6} = \ket{20},	\nonumber \\
	\ket{\psi_7} &=& \ket{21}, \quad
	\ket{\psi_8} = \ket{22}.	\label{EQ_LOCC_BASIS}
\end{eqnarray}
Since the non-selective measurement along this basis is implementable by LOCC (see Fig.~\ref{FIG_LOCC}), it follows that the entanglement can never increase on avarege.

\begin{figure}
	\includegraphics[width=.4\textwidth]{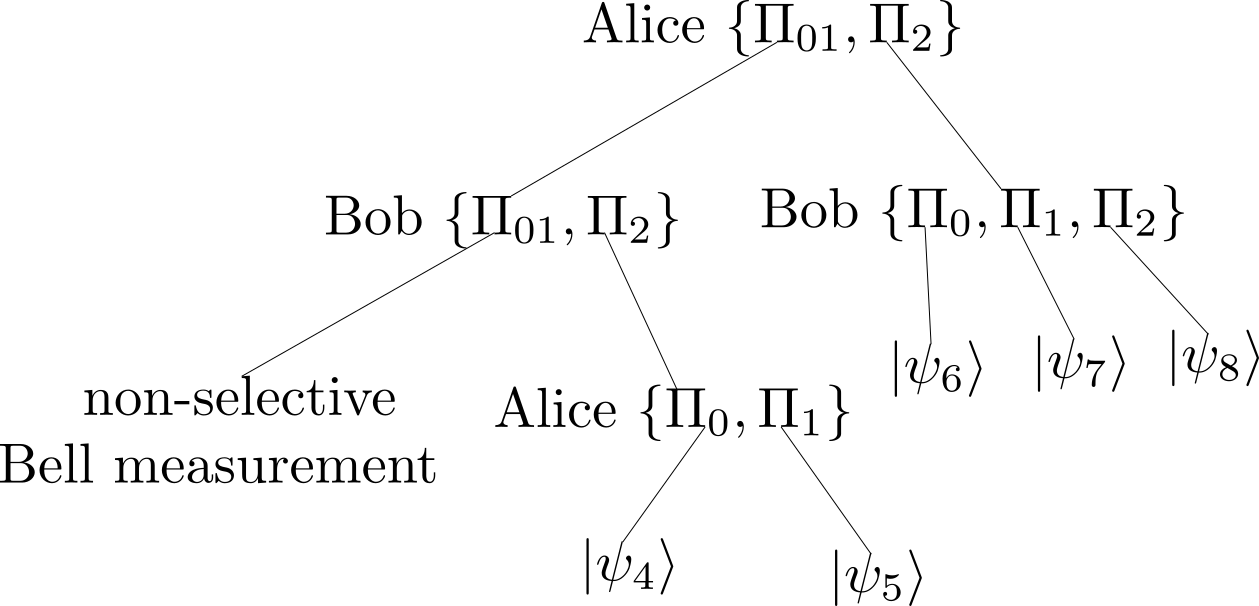}
	\caption{
	Graphical depiction of an LOCC protocol to implement non-selective measurement in basis (\ref{EQ_LOCC_BASIS}).
	Here $\Pi_{01}$ denotes $\proj{0} + \proj{1}$, and $\Pi_{i} = \proj{i}$.
	The LOCC procedure for the non-selective Bell measurement is described around Eq.~(\ref{EQ_BELL_LOCC}).
	}
	\label{FIG_LOCC}
\end{figure}

We have shown that for any non-trivial non-selective measurement there exists a state whose entanglement can be improved.
Now we show that given any input pure state, which is not maximally entangled, one can find a projective non-selective measurement that increases the amount of entanglement of the input state.

\begin{theorem}
\label{TH_PURE_IN}
	Suppose a pure initial state $\ket{\phi}$ is not maximally entangled.
	Then there exist a non-selective projective measurement $\{ \Pi_j \}$
	increasing negativity, i.e. $N(\sum_j \Pi_j | \phi \rangle \langle \phi | \Pi_j) > N(| \phi \rangle \langle \phi |)$.
\end{theorem}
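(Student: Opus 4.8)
The plan is to run the construction of Theorem~\ref{TH_GIVEN_MEAS} in reverse: instead of being handed a measurement and building a state that profits from it, I am handed the state $\ket{\phi}$ and must manufacture a measurement whose dominant outcome is slightly \emph{more} entangled than $\ket{\phi}$ itself. Concretely, I would write $\ket{\phi}=\sum_{i=0}^{d_1-1}\sqrt{\lambda_i}\ket{ii}$ in its Schmidt basis (padding with zero coefficients so the index runs over all $d_1$ levels) and let $\ket{\Phi}=\frac{1}{\sqrt{d_1}}\sum_{i=0}^{d_1-1}\ket{ii}$ be the maximally entangled state in that same basis. The first basis vector of the measurement is taken to be the normalised nudge \emph{toward} $\ket{\Phi}$,
\begin{equation}
\ket{\psi_0}\sim\ket{\phi}+\delta\ket{\Phi}=\sum_{i=0}^{d_1-1}\Bigl(\sqrt{\lambda_i}+\delta/\sqrt{d_1}\Bigr)\ket{ii},
\end{equation}
with normalisation $C^{-1}=(1+2\delta\beta+\delta^2)^{-1/2}$ and $\beta\equiv\braket{\phi}{\Phi}$, and I would complete $\ket{\psi_0}$ arbitrarily to an orthonormal basis $\{\ket{\psi_j}\}_{j=0}^{D-1}$, which fixes the measurement $\{\Pi_j=\proj{\psi_j}\}$.

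The reason for adding, rather than subtracting as in Theorem~\ref{TH_GIVEN_MEAS}, the maximally entangled state is that this makes the Schmidt coefficients of $\ket{\psi_0}$ more uniform and simultaneously fills in any missing Schmidt levels, so $\ket{\psi_0}$ is strictly more entangled than $\ket{\phi}$. Using the pure-state formula $N=\half[(\sum_i\sqrt{\mu_i})^2-1]$ one finds $\sum_i\sqrt{\mu_i}(\psi_0)=\sqrt{d_1}\,(\beta+\delta)/\sqrt{1+2\delta\beta+\delta^2}$, whose derivative at $\delta=0$ equals $\sqrt{d_1}\,(1-\beta^2)$, whence
\begin{equation}
N(\psi_0)-N(\phi)=d_1\,\beta(1-\beta^2)\,\delta+O(\delta^2).
\end{equation}
This gain is first order and strictly positive precisely because $\beta=\braket{\phi}{\Phi}<1$, i.e. because $\ket{\phi}$ is \emph{not} maximally entangled; the hypothesis of the theorem enters at exactly this point and covers, in one stroke, both the case of unequal Schmidt weights and the case of deficient Schmidt rank.

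Meanwhile the post-measurement state is $\rho'=(1-p)\proj{\psi_0}+p\,\tilde\rho$, with $\tilde\rho$ supported on the orthogonal complement of $\ket{\psi_0}$ and leakage probability $p=1-|\braket{\psi_0}{\phi}|^2=\delta^2(1-\beta^2)/C^2=O(\delta^2)$ that is only second order, since two nearby unit vectors always have an overlap deficit quadratic in their separation. Reusing the trace-norm triangle inequality exactly as in the proof of Theorem~\ref{TH_GIVEN_MEAS} yields $N(\rho')\ge N(\psi_0)-d_1\,p$, and therefore
\begin{equation}
N(\rho')-N(\phi)\ge d_1\,\beta(1-\beta^2)\,\delta-O(\delta^2),
\end{equation}
which is positive for all sufficiently small $\delta>0$.

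I expect the only delicate point to be this competition of orders: one must ensure that the first-order entanglement gain of the dominant outcome $\ket{\psi_0}$ is not swamped by the second-order leakage into the orthogonal subspace. Both halves of the estimate hinge on the single strict inequality $\beta<1$ supplied by the non-maximal-entanglement hypothesis, so once that is in hand the conclusion follows by choosing $\delta$ small, and the explicit completion of the basis $\{\ket{\psi_j}\}$ is immaterial beyond fixing $\ket{\psi_0}$.
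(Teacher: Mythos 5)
Your proof is correct, but it takes a genuinely different route from the paper's. The paper's construction is exact and non-perturbative: it takes the two-outcome basis $\ket{\psi_{0,1}}\propto\ket{\phi}\pm\ket{\Phi}$, observes that $\ket{\phi}$ lies entirely in their span, and that the post-measurement state is \emph{exactly} $\tfrac12\proj{\phi}+\tfrac12\proj{\Phi}$; since the two pure states share a Schmidt basis with non-negative coefficients, the partially transposed blocks add and the negativity is exactly $\tfrac12[N(\phi)+\tfrac12(d_1-1)]$, i.e.\ the gain is half the distance to maximal negativity --- a finite, quantitative amount. Your argument instead runs the perturbative machinery of Theorem~\ref{TH_GIVEN_MEAS} in reverse: a first-order gain $d_1\beta(1-\beta^2)\delta$ in the dominant outcome (correctly computed, and strictly positive because $0<\beta<1$ exactly when $\ket{\phi}$ is non-maximally entangled) against an $O(\delta^2)$ leakage controlled by the same trace-norm triangle inequality $N(\rho')\ge N(\psi_0)-d_1p$. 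What your version buys is conceptual unification --- Theorems~\ref{TH_GIVEN_MEAS} and~\ref{TH_PURE_IN} become two instances of one first-order-versus-second-order competition, and it shows the gain persists for measurements arbitrarily close to the trivial one containing $\ket{\phi}$ itself --- but it only certifies an infinitesimal improvement, whereas the paper's choice delivers a large explicit gain and, not incidentally, is the seed of the candidate-optimal two-qubit basis \eqref{EQ_OPT_BASIS} studied later. One small presentational point: the positivity of the leading term needs $\beta>0$ as well as $\beta<1$; the former is automatic ($\beta=\tfrac{1}{\sqrt{d_1}}\sum_i\sqrt{\lambda_i}\ge 1/\sqrt{d_1}$), but it is worth saying.
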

\begin{proof}
Let us choose the measurement basis as follows:
\begin{eqnarray}
| \psi_0 \rangle & = & \frac{1}{\sqrt{2 + 2 \langle \phi | \Phi \rangle}} (\ket{\phi} + \ket{\Phi}), \nonumber \\
| \psi_1 \rangle & = & \frac{1}{\sqrt{2 - 2 \langle \phi | \Phi \rangle}} (\ket{\phi} - \ket{\Phi}), \label{EQ_MEAS_CHOICE}
\end{eqnarray}
where $\ket{\Phi} = \frac{1}{\sqrt{d_1}} \sum_{i=0}^{d_1-1} \ket{ii}$ is the maximally entangled state having the same Schmidt basis as $\ket{\phi}$.
One verifies that $| \psi_0 \rangle$ and $| \psi_1 \rangle$ are orthogonal as they should be and that $\ket{\phi}$ lies in the subspace they span, i.e. $\ket{\phi} = \alpha | \psi_0 \rangle + \beta | \psi_1 \rangle$.
The other basis states are arbitrary as they do not contribute to the post-measurement state
\begin{eqnarray}
\rho' & = & |\alpha|^2 | \psi_0 \rangle \langle \psi_0 | + |\beta|^2 | \psi_1 \rangle \langle \psi_1 | \nonumber \\
& = & \frac{1}{2} \ket{\phi} \bra{\phi} + \frac{1}{2} \ket{\Phi} \bra{\Phi},
\end{eqnarray}
where the second form of the state follows from (\ref{EQ_MEAS_CHOICE}).
Since both $\ket{\phi}$ and $\ket{\Phi}$ have the same Schmidt basis,
the final negativity is
\begin{eqnarray}
N(\rho') & = & \tfrac{1}{2} \left[ N(\ket{\phi} \bra{\phi}) + N(\ket{\Phi} \bra{\Phi}) \right] \\
& = & \tfrac{1}{2} \left[ N(\ket{\phi} \bra{\phi}) +  \tfrac{1}{2}(d_1 - 1) \right] > N(\ket{\phi} \bra{\phi}), \nonumber
\end{eqnarray}
where the value of negativity of the maximally entangled state was used.
\end{proof}

%%%%%%%%%%%%%%%%%%%%%%%%%%%%%%%%%%%%%%%%%%%%%%%%%%%%%

\subsection{Mixed input state}

A similar statement to Theorem~\ref{TH_PURE_IN}, but for all mixed input states, does not hold.
To see this, recall that a state $\rho$ is termed \emph{absolutely separable} \cite{Zyczkowski_1998},
if $U \rho U^{\dagger}$ is separable for every unitary $U$ acting on $\mathcal{H}_1 \otimes \mathcal{H}_2$.
We will argue that projective non-selective measurement can always be implemented as mixed unitary channel acting on the input.
Therefore, any absolutely separable input state leads to a separable post-measurement state.

\begin{proposition}
The measurement map $\rho \to \sum_j \Pi_j \rho \Pi_j$ can be implemented as $\rho \to \sum_j p_j U_j \rho U_j^\dagger$, i.e. as a mixture of unitary operations.
\end{proposition}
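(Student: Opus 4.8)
The plan is to recognise the non-selective measurement map as the completely dephasing channel in the eigenbasis of the projectors, and to exhibit that channel as a uniform mixture of diagonal unitaries. Writing $\Pi_j = \proj{\psi_j}$ for the measurement basis $\{ \ket{\psi_j} \}_{j=0}^{D-1}$, I would introduce the $D$ operators
\begin{equation}
U_k = \sum_{j=0}^{D-1} \omega^{jk} \proj{\psi_j}, \qquad k = 0, \dots, D-1,
\end{equation}
where $\omega = e^{2 \pi i / D}$ is a primitive $D$-th root of unity, and take the uniform weights $p_k = 1/D$. Each $U_k$ is manifestly unitary, being diagonal in the orthonormal basis $\{ \ket{\psi_j} \}$ with eigenvalues $\omega^{jk}$ of unit modulus, so the right-hand side of the claimed identity is a legitimate mixed-unitary channel.

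The key step is to substitute these operators into the mixture and collapse the sum over $k$ using orthogonality of roots of unity. Expanding $U_k \rho U_k^\dagger = \sum_{j,j'} \omega^{(j-j')k} \proj{\psi_j} \rho \proj{\psi_{j'}}$ and averaging over $k$, the $k$-dependence factors out into the inner sum $\tfrac{1}{D} \sum_{k} \omega^{(j-j')k}$, which equals $\delta_{jj'}$ by the standard geometric-series identity. Hence every off-diagonal term with $j \neq j'$ is annihilated and only the diagonal terms survive, leaving precisely $\sum_j \Pi_j \rho \Pi_j = \rho'$. This is the whole content of the calculation.

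I do not expect a genuine obstacle here: the construction is explicit and the only computation is the root-of-unity orthogonality relation. The single point worth stating carefully is that the argument uses nothing about the bipartite splitting $\mathcal{H}_1 \otimes \mathcal{H}_2$ and works verbatim for any finite $D$ and any orthonormal basis. Combined with the preceding remark, the conclusion is immediate: if the input $\rho$ is absolutely separable then each $U_k \rho U_k^\dagger$ is separable, and since separability is preserved under convex mixtures the post-measurement state $\rho'$ is separable as well, so no entanglement can be gained.
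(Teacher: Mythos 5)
Your construction is correct and is essentially identical to the paper's: your $U_k = \sum_j \omega^{jk}\Pi_j$ is just the $k$-th power of the single unitary $U = \sum_j \omega^{j}\Pi_j$ that the paper defines, the weights are uniform $1/D$ in both cases, and the collapse of the off-diagonal terms via the root-of-unity orthogonality relation is the same computation. (You are even slightly more careful than the paper in writing $\omega = e^{2\pi i/D}$ rather than $\omega_d$.)
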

\begin{proof}
By construction. Consider unitary operator:
\begin{equation}
U \equiv \sum_{j=0}^{D-1} \omega_d^j \Pi_j.
\label{EQ_Z_DEF}
\end{equation}
The following probabilistic mixture mimics the non-selective measurement:
\begin{eqnarray}
\frac{1}{D} \sum_{i=0}^{D-1} U^i \rho (U^i)^\dagger & = & \frac{1}{D} \sum_{i,j,k=0}^{D-1} \omega_d^{i(j-k)} \Pi_j \rho \Pi_k \nonumber \\
& = & \sum_{j,k=0}^{D-1} \delta_{jk} \Pi_j \rho \Pi_k \nonumber \\
& = & \sum_{j = 0}^{D-1} \Pi_j \rho \Pi_j.
\end{eqnarray}
In the first line we used definition (\ref{EQ_Z_DEF}) and in the second line the following form of Kronecker delta: $\sum_{i=0}^{D-1} \omega_d^{i(j-k)} = D \delta_{jk}$.
\end{proof}

The class of absolutely separable states is rather non-trivial.
For example, it has been completely characterised for two qubits~\cite{Ishizaka_2000,Verstraete_2001}.
If $\lambda_1 \geq \lambda_2 \geq \lambda_3 \geq \lambda_4$ are eigenvalues of $\rho$, then $\rho$ is absolutely separable, if and only if $\lambda_3 + 2 \sqrt{\lambda_2 \lambda_4} - \lambda_1 \geq 0$.
A concrete not absolutely separable mixed state, whose entanglement increases as a result of non-selective measurement, is seen as follows.
Consider
$\rho = \frac{1+2\epsilon}{3} \ket{00}\bra{00} + \frac{1-\epsilon}{3} \ket{11}\bra{11} + \frac{1-\epsilon}{3} \ket{01}\bra{01}$,
for a small $\epsilon > 0$,
and the basis as in \eqref{EQ_OPT_BASIS} below, for $0<b<\frac{1}{\sqrt{2}}$ and $c=0$.
% \begin{eqnarray}
% 	\ket{\psi_1} & = & \ket{01}, \quad \ket{\psi_2} =  \ket{10}, \\
% 	\ket{\psi_3} & = & a \ket{00} + \sqrt{1-a^2} \ket{11}, \nonumber \\
% 	\ket{\psi_4} & = & \sqrt{1 - a^2} \ket{00} - a \ket{11}, \nonumber 
% \end{eqnarray}
% for $0 < a < \frac{1}{2}$.
The state $\rho$ is separable and $\epsilon$-close to the set of absolutely separable states. 
It is easy to see, however, that the negativity of
the post-measurement state is strictly positive.
It is an intriguing question for further research whether all the states not from this class can gain entanglement in a suitable non-selective measurement.

%%%%%%%%%%%%%%%%%%%%%%%%%%%%%%%%%%%%%%%%%%%%%%%%%%%%%

\subsection{Two qubits}

We now provide results on the optimal measurement, giving the highest negativity gain, for any pure input state.
We begin with formulae valid for $d_1 = d_2 = d$ but soon restrict ourselves to $d = 2$, i.e. two qubits.
In this case we present proofs of optimality for a specific choice of measurement basis and end this section with numerical evidence that this choice of measurement basis is the best.

Consider a pure input state $\ket{\phi}$, endowed with Schmidt decomposition
\begin{equation}
\ket{\phi} = \sum_{i=0}^{d - 1} \sqrt{p_i} \ket{ii},
\end{equation}
where $p_i$ are probabilities.
We shall focus on the following specific choice of measurement basis.
We assume that the first $d$ vectors are spanned by the Schmidt basis of the input state, i.e.
\begin{equation}
\ket{\psi_j} = \sum_{i=0}^{d-1} \alpha_{ji} \ket{ii}, \quad j = 0,\dots,d-1,
\end{equation}
and all the remaining vectors $\ket{\psi_j}$, with $j \ge d$, are orthogonal to $\ket{\phi}$.
The state after the measurement reads
\begin{eqnarray}
	\rho' &=& \sum \limits_{j=0}^{d-1}  | \psi_j \rangle \langle \psi_j | \phi \rangle \langle \phi | \psi_j \rangle \langle \psi_j |  \\
	&=& \sum_{j, i, i'} \sum_{k, k'} \sqrt{p_i} \alpha_{ji} \sqrt{p_{i'}} \alpha_{ji'}^* \alpha_{jk} \alpha_{jk'}^* \ket{kk} \bra{k'k'}. \nonumber
\end{eqnarray}
Negativity of this state admits the lower bound
\begin{eqnarray}
	\label{eq:two_qudits_neg_derivation}
	N(\rho') &=& \sum_{k < k'} | \sum_{j, i, i'}
	\sqrt{p_i} \alpha_{ji} \sqrt{p_{i'}} \alpha_{ji'}^* \alpha_{jk} \alpha_{jk'}^* |	\\
	&\geq& | \sum_{k < k'} \sum_{j, i, i'} \sqrt{p_i} \alpha_{ji} \sqrt{p_{i'}} \alpha_{ji'}^* \alpha_{jk} \alpha_{jk'}^* |. \nonumber
\end{eqnarray}
It is therefore natural to define:
\begin{eqnarray}
	(\mathbf{m})_{ii'} &=& \sum_{k < k'}
	\sum_j \alpha_{ji} \alpha_{ji'}^* \alpha_{jk} \alpha_{jk'}^*	\\
	(\mathbf{q})_i		&=& \sqrt{p_i}, \nonumber
\end{eqnarray}
where matrix $\mathbf{m}$ depends only on the measurement,
and vector $\mathbf{q}$ depends only on the input state.
Then
\begin{equation}
	\label{eq:two_qudits_neg_inc}
	N(\rho') \geq | \langle \mathbf{q}, \mathbf{m} \mathbf{q} \rangle |.
\end{equation}

We now restrict ourselves to $d=2$.
A pure two-qubit input state in its Schmidt basis reads:
\begin{equation}
	\ket{\psi} = a \ket{00} + \sqrt{1 - a^2} \ket{11},
	\label{EQ_KETIN}
\end{equation}
where, without loss of generality, $a \in [0,1/\sqrt{2}]$.
Negativity of this input state is:
\begin{equation}
	N_i = a \sqrt{1 - a^2}.
	\label{EQ_NI}
\end{equation}
We take the following basis of the projective non-selective measurement:
\begin{eqnarray}
	\ket{\psi_1} & = & b \ket{00} + \sqrt{1 - b^2} \ket{11}, \label{EQ_OPT_BASIS} \\
	\ket{\psi_2} & = & \sqrt{1 - b^2} \ket{00} - b \ket{11}, \nonumber \\
	\ket{\psi_3} & = & c \ket{01} + \sqrt{1 - c^2} \ket{10}, \nonumber \\
	\ket{\psi_4} & = & \sqrt{1 - c^2} \ket{01} - c \ket{10}, \nonumber
\end{eqnarray}
where we note that the Schmidt basis of the input state is used to define every $\ket{\psi_j}$, but $\ket{\psi_3}$ and $\ket{\psi_4}$ do not contribute to the post-measurement state.
Using the previously introduced notation:
	\begin{equation}
		\mathbf{q} = ( a, \sqrt{1 - a^2} ),
	\end{equation}
	\begin{equation}
		\mathbf{m} =
		\begin{pmatrix}
			b               &  \sqrt{1 - b^2} \\
			\sqrt{1 - b^2}  &  -b
		\end{pmatrix}.
	\end{equation}
Hence, negativity of the relevant states of the measurement basis is fully characterised by parameter $b$ and we call the number $N_b = b \sqrt{1 - b^2}$ the ``negativity of the measurement''.
Again, without loss of generality, $b \in [0,1/\sqrt{2}]$.
We observe that in the case of two qubits the outer sum in \eqref{eq:two_qudits_neg_derivation}
contains just one term and the inequality \eqref{eq:two_qudits_neg_inc} becomes an equality.
The negativity of the final state $\rho'$, expressed in terms of $N_i$ and $N_b$, is:
\begin{equation}
	N_f \equiv N(\rho') =  N_b \sqrt{1 - 4 N_i^2} \sqrt{1 - 4 N_b^2} + 4 N_i N_b^2.
\end{equation}
The condition for entanglement gain, $N_f > N_i$, is realised for measurements and states that satisfy $N_i < N_b < \frac{1}{2}$,
confirming what we have we already said about the lack of free lunch and the impossibility of entanglement enhancement in a measurement with maximally entangled basis states.

The best measurement for a given initial state (i.e. producing the largest final negativity) is obtained by solving $\frac{\partial N_f}{\partial N_b} = 0$ for $N_b$.
The corresponding negativity of the measurement reads
\begin{equation}
	N_b^{\max} = \frac{\sqrt{2 N_i +1}}{2 \sqrt{2}},
\end{equation}
and leads to entanglement gain decreasing linearly with the initial negativity:
\begin{equation}
	N_f^{\max} - N_i = \frac{1}{4} - \frac{N_i}{2}.
	\label{EQ_N_INCREMENT}
\end{equation}
Therefore, the largest negativity gain is for initial product state measured along a basis having negativity
$N_b^{\max} = 1/2 \sqrt{2}$.
This also shows again that any initial pure state has a strictly positive negativity gain in this process,
unless it is maximally entangled.

\begin{figure}[b]
\includegraphics[scale=0.55]{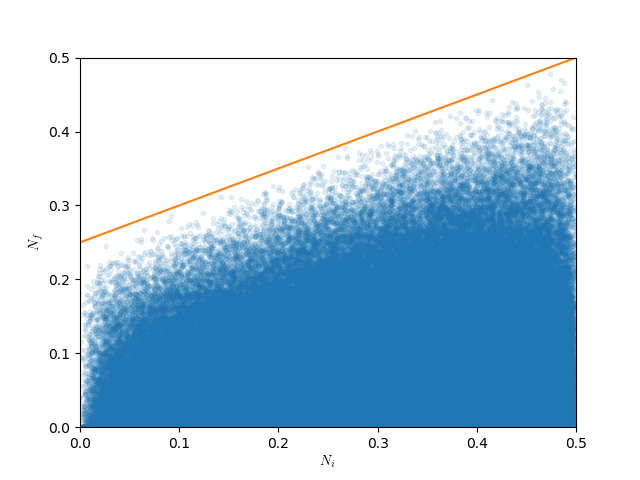}
\caption{Random sampling supports our claim that for the initial state (\ref{EQ_KETIN}) the measurement giving rise to highest entanglement gain is given by (\ref{EQ_OPT_BASIS}).
Each point represents negativity gain for a random input state measured along random basis.
Both states and measurements are sampled uniformly at random according to the Haar measures.
One million trials are recorded.
The upper boundary is given by Eq.~(\ref{EQ_N_INCREMENT}) derived for the candidate optimal measurement, i.e. given by (\ref{EQ_OPT_BASIS}).}
\label{FIG_OPT}
\end{figure}

Finally, we address the issue whether results just presented are globally optimal.
In order to derive them we assumed the measurement basis of the form (\ref{EQ_OPT_BASIS}).
Although we were not able to provide a proof that this is indeed the optimal choice among all possible measurement bases,
we have conducted numerical check which supports optimality.
Fig.~\ref{FIG_OPT} presents the data from a million of random trials, where each data point corresponds to a pair of random pure input state and random measurement.
We plot negativity gain $N_f - N_i$ as a function of the negativity of the initial state, $N_i$.
We observe linear boundary matching Eq.~(\ref{EQ_N_INCREMENT}) derived for the measurement in (\ref{EQ_OPT_BASIS}).

%%%%%%%%%%%%%%%%%%%%%%%%%%%%%%%%%%%%%%%%%%%%%%%%%%%%%
%%%%%%%%%%%%%%%%%%%%%%%%%%%%%%%%%%%%%%%%%%%%%%%%%%%%%

\section{Projective measurement in open system dynamics}

We have shown that non-selective global projective measurements can lead to entanglement gain in a bipartite system.
Here we ask if such measurements could emerge as a result of open system dynamics.
We demonstrate that a Markovian time evolution cannot lead to a perfect projective measurement in finite time.
However, it rapidly converges to the projective measurement as illustrated using Araki-\.Zurek model~\cite{Araki_1980,Zurek_1982}.

Suppose that $\mathcal{E}_t: \text{M}_{n} \rightarrow \text{M}_{n}$
is a Markovian evolution on the space of matrices of size $n$,
which is norm-continuous and $\mathcal{E}_0 = \identity$.
We require that this evolution reaches a projective map at least asymptotically , i.e. $\lim_{t \rightarrow \infty} \mathcal{E}_t = \Pi$,
where $\Pi$ is a map with rank one projectors $\{ \Pi_j \}$ as in our black-box problem.
It turns out that $\mathcal{E}_t$ cannot reach the projective map at any earlier time.
The argument below shows that if this was the case the evolution would have to be a projector even for tiny evolution times which contradicts continuity,
i.e. in this limit evolution map has to approach identity (nothing happens in very short times).

\begin{theorem}
	$\mathcal{E}_{\tau} \neq \Pi$ for any finite time $\tau$.
\end{theorem}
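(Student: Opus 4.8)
The plan is to exploit the semigroup structure forced by the Markovian assumption and to observe that every element of a norm-continuous semigroup is invertible, whereas the target measurement map $\Pi$ is not.

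First I would make the generator explicit. On the finite-dimensional space $\mathrm{M}_n$, a norm-continuous one-parameter Markovian semigroup with $\mathcal{E}_0 = \identity$ is generated by a bounded (Lindblad) generator $\mathcal{L}$, so that $\mathcal{E}_t = e^{t\mathcal{L}}$ for all $t \geq 0$. The single property I actually need from this representation is that $\mathcal{E}_\tau$ is \emph{invertible} for every finite $\tau$: its inverse is simply $e^{-\tau\mathcal{L}}$, equivalently $\det \mathcal{E}_\tau = e^{\tau\,\Tr\,\mathcal{L}} \neq 0$, where $\mathcal{L}$ is regarded as a linear map on the $n^2$-dimensional operator space.

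Next I would argue that the measurement map $\Pi$ is a non-trivial idempotent and hence singular. Using $\Pi_j \Pi_k = \delta_{jk}\Pi_j$ one checks directly that $\Pi \circ \Pi = \Pi$, so $\Pi$ is a projection on the space of operators. It is not the identity channel: for two distinct basis vectors it annihilates the coherence, $\Pi(\ket{\psi_0}\bra{\psi_1}) = 0$, so $\Pi$ has a non-trivial kernel and therefore $\det \Pi = 0$. Combining the two observations gives the result at once: if $\mathcal{E}_\tau = \Pi$ held for some finite $\tau$, we would have $e^{\tau\,\Tr\,\mathcal{L}} = \det\mathcal{E}_\tau = \det\Pi = 0$, which is impossible. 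Hence $\mathcal{E}_\tau \neq \Pi$ for every finite $\tau$, and the projective map can only be reached asymptotically.

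I expect the only delicate point to be the justification of the exponential form $\mathcal{E}_t = e^{t\mathcal{L}}$ from the bare hypotheses ``Markovian and norm-continuous'' — i.e. identifying ``Markovian'' with the semigroup law $\mathcal{E}_{t+s}=\mathcal{E}_t\mathcal{E}_s$ and invoking boundedness of the generator in finite dimension; once that structure is in hand the contradiction is immediate. As a cross-check that also connects with the exponentially fast convergence illustrated later via the Araki--\.Zurek model, one may instead compare spectra: existence of $\lim_{t\to\infty} e^{t\mathcal{L}} = \Pi$ forces every eigenvalue of $\mathcal{L}$ to have non-positive real part, with $0$ the only eigenvalue on the imaginary axis and semisimple, whereas $\mathcal{E}_\tau = \Pi$ would additionally force each $e^{\tau\lambda}\in\{0,1\}$ and hence, being nonzero, every eigenvalue of $\mathcal{L}$ to vanish, collapsing $\mathcal{E}_t$ to the identity and contradicting $\Pi \neq \identity$.
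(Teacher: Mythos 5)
Your proof is correct, but it takes a genuinely different route from the paper's. You pass to the generator: a norm-continuous Markovian semigroup on a finite-dimensional space is $\mathcal{E}_t = e^{t\mathcal{L}}$ with $\mathcal{L}$ bounded, hence every $\mathcal{E}_\tau$ is invertible ($\det \mathcal{E}_\tau = e^{\tau \Tr \mathcal{L}} \neq 0$), while $\Pi$ annihilates the coherences $\ket{\psi_j}\bra{\psi_k}$, $j \neq k$, and is therefore singular. The paper instead avoids the generator entirely: it sets $\mathcal{E}'_t = (\openone - \Pi)\circ \mathcal{E}_t$, shows that $\mathcal{E}'_{\tau} = 0$ forces the matrix of $\mathcal{E}'_{\tau/m}$ to be nilpotent for every $m$, hence zero (since the nilpotency index is bounded by $n^2$ independently of $m$), propagates this to all $t<\tau$, and contradicts continuity at $t=0$ where $\mathcal{E}_0 = \identity \neq \Pi$. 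Your argument is shorter and isolates the real obstruction (invertibility versus idempotence), at the price of invoking the standard but nontrivial fact that norm continuity plus the semigroup law yields the exponential form — which, as you note, is the one point requiring justification; if you want to avoid it, observe that $\det\mathcal{E}_t$ is continuous, equals $1$ at $t=0$, satisfies $\det\mathcal{E}_t = (\det\mathcal{E}_{t/2})^2 \geq 0$, and is multiplicative, so it can never vanish, which gives invertibility directly from the paper's bare hypotheses. The paper's argument is more elementary in that it never leaves the semigroup itself, but it quietly uses $\mathcal{E}_t \circ \Pi = \Pi$ (from the asymptotic assumption) to establish the primed composition law. Your closing spectral ``cross-check'' is looser than the main argument (e.g.\ $e^{\tau\lambda}=1$ alone only puts $\lambda$ in $\tfrac{2\pi i}{\tau}\mathbb{Z}$, and vanishing eigenvalues of $\mathcal{L}$ do not by themselves collapse $e^{t\mathcal{L}}$ to the identity without semisimplicity), but it is dispensable: the determinant argument already closes the proof.
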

\begin{proof}
By contradiction.
Suppose there is finite $\tau$ such that $\mathcal{E}_{\tau} = \Pi$.
Let $\Pi' = \openone - \Pi$, where $\openone$ is the identity map, i.e. the action of $\Pi'$ on a density matrix is to erase the diagonal elements in the basis defined by $\{ \Pi_j \}$.
We also define $\mathcal{E}_{t}' = \Pi' \circ \mathcal{E}_{t}$ and therefore $\mathcal{E}_{\tau}' = 0$.
Our aim is to show that $\mathcal{E}_{t}' = 0$ also for all $t < \tau$, i.e. $\mathcal{E}_{t}$ is a projector for all evolution times.

To this end we recall the definition of Markovianity $\mathcal{E}_{t + s} = \mathcal{E}_{t} \circ \mathcal{E}_{s} = \mathcal{E}_{s} \circ \mathcal{E}_{t}$.
Using it one verifies similar property for the primed maps:
\begin{equation}
\mathcal{E}_{t + s}' = \mathcal{E}_{t}' \circ \mathcal{E}_{s}' = \mathcal{E}_{s}' \circ \mathcal{E}_{t}'.
\label{EQ_PRIME_COMM}
\end{equation}
Note that since $\mathcal{E}_{t}'$ is a linear transformation, it can be represented as a $n^2 \times n^2$ matrix.
Let us now choose an integer $m$ and study matrix representation $A$ of the map $\mathcal{E}_{\tau / m}'$ (we put no limit on how high $m$ could be).
By our assumption and by property (\ref{EQ_PRIME_COMM}) we conclude that $A^m = 0$.
This implies that all the eigenvalues of matrix $A$ are zero, and therefore it has a simple Jordan normal form
\begin{equation}
A = S \left( \begin{array}{cccc}
0 & 1 & & \\
& 0 & 1 & \\
& & \ddots & \\
& & & 0
\end{array}
\right) S^{-1},
\end{equation}
with $1$s on suitable positions right above the diagonal and with $S$ being an invertible matrix.
Since matrix $A$ has size $n^2 \times n^2$, there exists an integer $\mu < n^2$ for which $A^\mu = 0$ because multiplying A with itself shifts the 1s further from the diagonal.
Since $\mu$ is independent of $m$ the map $\mathcal{E}_{\frac{\mu}{m} \tau}' = 0$ for any rational fraction of $\tau$.
For any other $t$ we note using property (\ref{EQ_PRIME_COMM}) again that $\mathcal{E}_{t}' = \mathcal{E}_{t - \frac{\mu}{m} \tau}' \circ \mathcal{E}_{\frac{\mu}{m} \tau}' = 0$.
We have shown that $\mathcal{E}_{t} = \Pi$ for any $0 < t < \tau$.
Since the map $t \mapsto \mathcal{E}_{t}$ is continuous, it is required that $\mathcal{E}_{0} = \Pi$, which amounts to contradiction.
\end{proof}

To illustrate how open quantum dynamics can lead to a projective measurement,
let us take concrete example within the Araki-\.Zurek model \cite{Araki_1980,Zurek_1982}.
Consider the system of two qubits coupled to the environment represented by a free particle moving on the real line.
We wish to demonstrate entanglement gain for the initial state $\rho_0 = \ket{00}\bra{00}$
and therefore choose the measurement basis:
\begin{eqnarray}
	\ket{\psi_0} & = & b \ket{00} + \sqrt{1 - b^2} \ket{11}, \\
	\ket{\psi_1} & = & \sqrt{1 - b^2} \ket{00} - b \ket{11}, \nonumber \\
	\ket{\psi_2} & = & \ket{01}, \quad
	\ket{\psi_3} = \ket{10}, \nonumber
\end{eqnarray}
with $0 < b < \frac{1}{\sqrt{2}}$.
Recall, that the post-measurement state reads $\rho' = \sum_{j} \langle \psi_j | \rho | \psi_j \rangle \, \ket{\psi_j} \bra{\psi_j}$.

Let us now define the total Hamiltonian for the evolution of the system and the environment.
We put
	\begin{equation}
		H = H_S \otimes \identity_E + \identity_S \otimes H_E + A \otimes B,
	\end{equation}
where $H_S = \sigma_z \otimes \identity - \identity \otimes \sigma_z$ is the hamiltonian for the system,
$H_E = \frac{\hat{p}^2}{2 m}$ is the hamiltonian of the environment
and where $\hat{p}$ is the momentum operator of the particle with mass $m$.
The interaction term is given by $A = \sum_{j} \lambda_j \ket{\psi_j}\bra{\psi_j}$,
with $\lambda_j \neq \lambda_{j'}$ for $j \neq j'$, together with $B = \hat{p}$.
Observe that $[H_S , A ] = 0$ as well as $[H_E, B] = 0$.
The initial state of the environment is chosen to be
$\omega_E = \ket{\phi_E} \bra{\phi_E}$,
where $\phi_E(x) = \frac{1}{\sqrt{2 \pi}} \int \frac{e^{i x p}}{\sqrt{\pi (1 + p^2)}} \, dp$ is the Cauchy distribution in momentum space here ensuring that the evolution is Markovian~\cite{Blanchard_2003}.

Starting from the uncorrelated initial state $\rho_0 \otimes \omega_E$ the time evolution of the system alone is given by the Eq. (7) in \cite{Blanchard_2003}:
	\begin{equation}
		\rho_t = \sum \limits_{n,m = 0}^3
		e^{- | \lambda_n - \lambda_m| \, t}
		e^{it (\gamma_n - \gamma_m)}
		\langle \psi_n | \rho_0 | \psi_m \rangle
		\ket{\psi_n} \bra{\psi_m},
	\end{equation}
where $\gamma_0 = \gamma_1 = 0$ and $\gamma_2 = - \gamma_3 = 2$.
Thus
\begin{eqnarray}
\rho_t & = & b^2 \ket{\psi_0} \bra{\psi_0} +
		(1-b^2) \ket{\psi_1} \bra{\psi_1} \nonumber \\
& + &
		e^{-|\mu| t} b \sqrt{1 - b^2}
		\left( \ket{\psi_0} \bra{\psi_1} + \ket{\psi_1} \bra{\psi_0} \right),
\end{eqnarray}
where $\mu = \lambda_1 - \lambda_2 \neq 0$.
The negativity of $\rho_t$ is
\begin{equation}
N(\rho_t) = b \sqrt{1-b^2} |2b^2 - 1| (1 - e^{-|\mu|t}).
\end{equation}
It is easy to see that in the limit of $t \rightarrow \infty$ the system state approaches the post-measurement state $\rho_t \rightarrow \rho'$ in the trace norm.
Note also that this convergence is exponentially fast, scaling as $\exp{(-|\mu|t)}$.

%%%%%%%%%%%%%%%%%%%%%%%%%%%%%%%%%%%%%%%%%%%%%%%%%%%%%
%%%%%%%%%%%%%%%%%%%%%%%%%%%%%%%%%%%%%%%%%%%%%%%%%%%%%

\section{Random states and random measurements}

Here we provide several results about entanglement gain with random measurements and random pure initial states on two $d$-level systems.\\
We found numerically that the probability of negativity gain is very small for $d = 2$ ($\approx 1.6\%$), whereas for higher $d$ it is smaller than machine precision of present day computing clusters.
In order to understand this, we study the distribution of negativity in a random pure input state of two qudits
and compare it to the distribution of negativity in the post-measurement state.

\begin{table}
\begin{tabular}{c|c}
	dim	&	$\langle \mathcal{S} \rangle$	\\ %& $I(N_f, \phi)$\\
	\hline
	2	&	0.063 \\ %	& 0.0656\\
	3	&	0.054 \\ %	& 0.0292\\
	4	&	0.045 \\ %	& 0.0182
\end{tabular}
\caption{Estimated mean of the statistical distance (\ref{EQ_STAT_DIST}) obtained by sampling $10^4$ input states.
It shows that the probability distribution of negativity in the post-measurement state is practically independent of the input state.}
\label{TAB_NEG_DIST}
\end{table}

First we argue that the negativity distribution in the post-measurement state is practically independent of the input state.
Consider an input pure state $\ket{\phi}$ measured in a uniformly random basis.
We denote by $P(N_f | \phi)$ the probability of observing an output state with negativity $N_f$.
We have sampled from this distribution $10^4$ times for a given input state and numerically approximated it by dividing the range of negativity (note that it depends on $d$) into $10^3$ bins.
In this way we discretise possible values of negativity justifying out notation $P(N_f | \phi)$ instead of probability density suitable for continuous variables.
We then additionally sampled uniformly at random the input states, $10^4$ times, in order to approximate $P(N_f)$, probability distribution of negativity in the output state averaged over the input states.
Tab.~\ref{TAB_NEG_DIST} shows that the two distributions are very similar as revealed by the estimated mean value of the statistical distance between them
\begin{equation}
\mathcal{S}(\phi) = \sum_{N_f} |P(N_f | \phi) - P(N_f)|. 
\label{EQ_STAT_DIST}
\end{equation}

We plot the distribution of negativity in the post-measurement state in Fig.~\ref{FIG_RANDOM} for two qudits with $d = 2,3,4$.
For comparison we also present the negativity distribution in a randomly chosen input state.
As clearly seen only for $d=2$ the two distributions have non-negligible overlap.
It is also clear that the chances of increasing negativity via a random measurement are negligible for all $d > 3$.

\begin{figure}
	\includegraphics[scale=0.55]{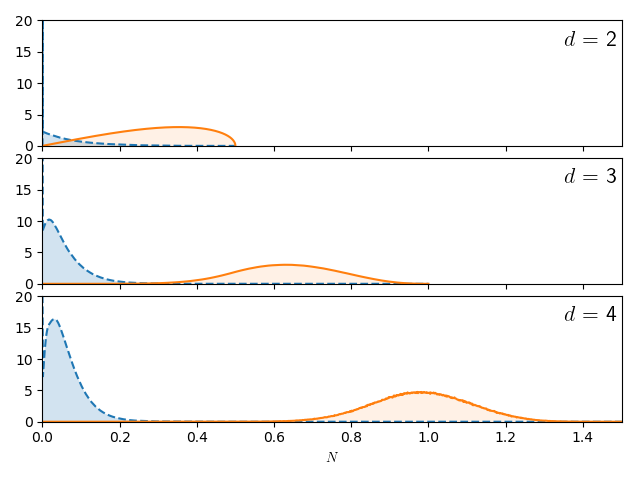}
	\caption{Distribution of negativity between two $d$-level systems.
	Dashed lines enclosing blue colour show the distributions for the post-measurement state
	whereas solid lines enclosing orange colour show the distribution for a random pure input states.}
	\label{FIG_RANDOM}
\end{figure}

%%%%%%%%%%%%%%%%%%%%%%%%%%%%%%%%%%%%%%%%%%%%%%%%%%%%%
%%%%%%%%%%%%%%%%%%%%%%%%%%%%%%%%%%%%%%%%%%%%%%%%%%%%%

\section{Conclusions}

We proposed the use of global non-selective projective measurements as means for entanglement creation or increase.
Excellent approximations to such measurements may occur naturally in open system dynamics or they may also appear in engineered systems due to the lack of ability to post-select a particular measurement result.
We showed that entanglement of any pure non-maximally entangled state can be improved in this way, but the final state can never be maximally entangled.
In fact, measurements along bases with solely maximally entangled states provide no entanglement gain whatsoever (up to comments in Sec.~\ref{SEC_MAX_ENT}).
We derived candidate optimal measurements, giving maximal entanglement gain, for any pure input state of two qubits.
Numerical checks support the optimality.
It turns out that it is best to start with a product state and measure it along the basis with states having moderate negativity of $1 / 2 \sqrt{2}$.
We hope these results on entanglement gain will find concrete realisations leading to new ways of generating this important physical resource.

\acknowledgments

We thank Micha{\l} Horodecki, Kavan Modi, Aby Philip and Somasundaram Sankaranarayanan for discussions.
SB thanks Anindita Banerjee and Saronath Halder for helpful discussions.
This work is supported by Singapore Ministry of Education Academic Research Fund Tier 2 Project No. MOE2015-T2-2-034.
SB is supported in part by SERB project EMR/2015/002373.

\section{Appendix}

\begin{lemma}
	\label{LEM_FOLKLORE}
Let $d_1 = d_2 = 2$. Assume that every state $\ket{\psi_j}$ of an orthonormal basis is maximally entangled.
Then the basis is locally unitarily equivalent to the Bell basis.
\end{lemma}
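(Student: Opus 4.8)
The plan is to set up the standard bijection between maximally entangled two-qubit states and $2\times 2$ unitaries, translate orthogonality of states into orthogonality of real four-vectors, and then exploit the fact that local unitaries act on those vectors as the rotation group $SO(4)$. This is the natural counterpart of the Pauli decomposition already used in the proof of the Theorem above.

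First I would write any maximally entangled state as $\ket{\psi} = (\identity \otimes M)\ket{\Phi^+}$, where $\ket{\Phi^+} = \frac{1}{\sqrt{2}}(\ket{00}+\ket{11})$ and $M$ is a $2\times 2$ unitary; this is simply the statement that the matrix of coefficients of a maximally entangled state is proportional to a unitary. A one-line computation gives $\braket{\psi_i}{\psi_j} = \frac{1}{2}\Tr(M_i^\dagger M_j)$, so two such states are orthogonal iff $\Tr(M_i^\dagger M_j)=0$. Writing $M_j = e^{i\theta_j}\bigl(x_0^{(j)}\identity + i x_1^{(j)}\hat\sigma_1 + i x_2^{(j)}\hat\sigma_2 + i x_3^{(j)}\hat\sigma_3\bigr)$ with real coefficients assembled into a unit vector $v_j \in \R^4$, and using $\Tr(\tau_\mu^\dagger \tau_\nu)=2\delta_{\mu\nu}$ for the orthogonal basis $\{\identity, i\hat\sigma_1, i\hat\sigma_2, i\hat\sigma_3\}$, one finds $\Tr(M_i^\dagger M_j) = 2 e^{i(\theta_j-\theta_i)}\langle v_i, v_j\rangle$. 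Hence a maximally entangled orthonormal basis corresponds precisely to an orthonormal basis $\{v_j\}$ of $\R^4$, with the Bell basis corresponding to the standard basis $\{e_\mu\}$.

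Next I would track the action of local unitaries. Using the transpose identity $(A\otimes B)\ket{\Phi^+} = (\identity \otimes B A^T)\ket{\Phi^+}$, the map $U_A\otimes U_B$ sends $M \mapsto U_B M U_A^T$. Restricting $U_A,U_B$ to $SU(2)$ and identifying $SU(2)$ with the unit quaternions, this is left multiplication by $U_B$ together with right multiplication by $U_A^T$, which again ranges over all of $SU(2)$; by the quaternionic double cover $SU(2)\times SU(2)\to SO(4)$ this realises every rotation of $\R^4$ on the vectors $v_j$. Finally, any orthonormal basis $\{v_j\}$ is carried onto the standard basis by some $O\in O(4)$; if $\det O=-1$ I would precompose with a transposition of the target labels --- an odd, basis-preserving symmetry of $\{e_\mu\}$ --- to land in $SO(4)$, realise the resulting rotation by local unitaries, and conclude that the given basis is locally unitarily equivalent to the Bell basis up to relabelling and phases.

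The main obstacle is the step establishing surjectivity onto $SO(4)$: I must verify the quaternionic double cover carefully (that left and right multiplication by unit quaternions generate all of $SO(4)$, and that $U_A^T$ indeed sweeps out all of $SU(2)$), and then deal with the orientation subtlety, since an $O(4)$ change of basis with determinant $-1$ is not directly implementable by $SU(2)\times SU(2)$ and it is the relabelling trick that rescues it.
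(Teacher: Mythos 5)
Your argument is correct, and it takes a genuinely different route from the paper. The paper first uses the Schmidt decomposition to fix $\ket{\psi_0}=\frac{1}{\sqrt2}(\ket{00}+\ket{11})$, observes that the remaining three coefficient matrices are then traceless unitaries, hence phases times Hermitian ones, and so together with $\frac{1}{\sqrt2}\identity$ form an orthonormal basis of the real four-dimensional space of Hermitian $2\times2$ matrices; the resulting isometry has the block form $1\oplus R_0$ with $R_0\in \mathrm{O}(3)$, a sign flip of one basis vector puts $R_0$ in $\mathrm{SO}(3)$, and the $\mathrm{SU}(2)\to \mathrm{SO}(3)$ cover produces a single $U$ with $\alpha^{(j)}=\frac{1}{\sqrt2}U\sigma_j U^\dagger$, yielding local unitaries of the form $U\otimes U^*$ in the Schmidt bases. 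You instead keep the full quaternionic picture: you do not standardise $\ket{\psi_0}$ first, you encode each state as a unit vector in $\R^4$ via the anti-Hermitian basis $\{\identity, i\hat\sigma_1,i\hat\sigma_2,i\hat\sigma_3\}$, and you use the surjection $\mathrm{SU}(2)\times\mathrm{SU}(2)\to \mathrm{SO}(4)$ (left and right quaternion multiplication) so that the two local unitaries act independently. What your version buys is symmetry and economy --- no preliminary reduction, no splitting off of the trace part --- at the price of invoking the less elementary $\mathrm{SO}(4)$ double cover, which you rightly flag as the step needing care. The orientation obstruction is handled differently but equivalently in the two proofs: the paper absorbs a determinant $-1$ by changing the global phase (sign) of one measurement vector, while you precompose with a transposition of target labels; you could equally well use the phase freedom $e^{i\theta_j}$ already present in your parametrisation, which would keep the labelling intact. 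One small point worth making explicit when you write this up: the identification of $\mathrm{span}_{\R}\{\identity, i\hat\sigma_k\}$ with the quaternions requires fixing signs so that the multiplication tables match (e.g.\ $(i\hat\sigma_1)(i\hat\sigma_2)=-i\hat\sigma_3$), but this does not affect the conclusion since orthogonality and the norm are all that is used.
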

\begin{proof}
Let $\{ \ket{a_k b_l} \}$ be a product-vector basis, for which $\ket{\psi_0} = \frac{1}{\sqrt{2}}(\ket{00} + \ket{11})$.
There always exists such a basis due to Schmidt decomposition.
We expand the rest of the states $\ket{\psi_j}$ in the same basis:
\begin{equation}
\ket{\psi_j} = \sum \limits_{k,l=0}^{1} \alpha^{(j)}_{kl} \ket{a_k b_l},
\end{equation}
and consider $2 \times 2$ matrices $\alpha^{(j)}$ with entries given by the coefficients of the state $\ket{\psi_j}$.
They have the following properties stemming from maximal entanglement and orthonormality of these states:
\begin{eqnarray}
\textrm{max. entanglement} & \iff & \alpha^{(j)} \textrm{ unitary} \nonumber \\
\langle \psi_j | \psi_{j'} \rangle = \delta_{j j'} & \iff & \Tr[\alpha^{(j)} ( \alpha^{(j')} )^\dagger] = \delta_{j j'}
\end{eqnarray}
Note that $\Tr(\alpha^{(j)}) = 0$ for $j = 1,2,3$.
Any traceless $2 \times 2$ unitary matrix is proportional to a hermitian one, i.e. each $\alpha^{(j)} = e^{i \eta_j} A^{(j)}$, where $A^{(j)}$ is Hermitian.
Since multiplying matrices $\alpha^{(j)}$ by a phase factor $e^{i \eta_j}$ does not change the post-measurement state $\rho'$, we can assume without loss of generality that each $\alpha^{(j)}$ itself is Hermitian, for $j = 1,2,3$.
The set $\{ \frac{1}{\sqrt{2}} \identity, \alpha^{(1)}, \alpha^{(2)}, \alpha^{(3)} \}$ is therefore an orthonormal basis in the four-dimensional real Hilbert space of hermitian $2 \times 2$ matrices with the Hilbert-Schmidt inner product.
Since the same space is spanned by the identity and the Pauli matrices there is an isometry $R$ such that
\begin{eqnarray}
\identity & = & R \identity \nonumber \\
\alpha^{(j)} & = & R ( \tfrac{1}{\sqrt{2}} \sigma_j) \quad \textrm{for } j = 1,2,3,
\end{eqnarray}
where $\sigma_j$ is the $j$th Pauli matrix.
The matrix representation of the mapping $R$ reads
	\begin{equation}
		R = \begin{pmatrix}
			1 & 0 \\
			0 & R_0
		\end{pmatrix},
	\end{equation}
	where $R_0 \in \text{O}(3)$ is an orthogonal matrix.
	We can always assume that $R_0 \in \text{SO}(3)$.
	Indeed, if $\text{det} R_0 = -1$, we would take
	one of the measurement basis vectors,
	say $\ket{\psi_1}$,
	with the opposite sign, which amounts to a global phase change.
	That would change the sign of $\alpha^{(1)}$
	and consequently the sign of $\text{det} R_0$,
	but the post-measurement state $\rho'$ would stay the same.

Using the homomorphism between the groups $\text{SU}(2)$ and $\text{SO}(3)$ we infer that there exists a unitary matrix $U \in \text{SU}(2)$, such that
$\alpha^{(j)} =  \frac{1}{\sqrt{2}} R_0 \sigma_j = \frac{1}{\sqrt{2}} U \sigma_j U^{\dagger}$, for $j = 1,2,3$, and therefore:
\begin{equation}
| \psi_j \rangle = \frac{1}{\sqrt{2}} \sum_{k,l =0}^1 (U \sigma_j U^\dagger)_{kl} |a_k b_l \rangle.
\end{equation}
Note that last equation holds for all $j$ because $\alpha^{(0)} = \identity$.
By opening up matrix multiplication $U \sigma_j U^\dagger$ and defining new local bases
\begin{eqnarray}
\ket{m} & \equiv & \sum_k U_{km} \ket{a_k}, \nonumber \\
\ket{n} & \equiv & \sum_l U_{ln}^* \ket{b_l},
\end{eqnarray}
the initial maximally entangled basis is brought to the form
\begin{equation}
| \psi_j \rangle = \frac{1}{\sqrt{2}} \sum_{m,n=0}^1 (\sigma_j)_{mn} \ket{mn}.
\end{equation}
This is, up to irrelevant global phase of $\ket{\psi_2}$, the standard Bell basis.
\end{proof}

\bibliography{refs}

\end{document}